\documentclass{article}
\usepackage{microtype}
\usepackage{graphicx}
\usepackage{subfigure}
\usepackage{booktabs} 
\usepackage{tikz}
\usepackage{enumitem}
\usepackage{wrapfig}

\usepackage{hyperref}

\usepackage[accepted]{icml2019}

\icmltitlerunning{Hiring Under Uncertainty}

\usepackage{amsmath,amsthm,amsfonts,amssymb,xcolor}
\usepackage{tikz}

\def\algo{\mathsf{alg}_1}
\def\algt{\mathsf{alg}_2}
\def\opt{\mathsf{opt}}
\def\greedy{\mathsf{greedy}}
\def\val{\mathsf{val}}
\def\seg{\mathsf{segval}}

\def\seqalg{\mathsf{seqalg}}
\def\seqopt{\mathsf{seqopt}}

\def\paralg{\mathsf{paralg}}
\def\parheur{\mathsf{parheur}}
\def\paropt{\mathsf{paropt}}
\def\size{\mathsf{size}}
\def\tree{\mathcal{T}}
\def\candidates{\mathcal{C}}

\def\dx{\; dx}

\newcommand{\p}[1]{\left( #1 \right)}
\renewcommand{\b}[1]{\left[ #1 \right]}
\newcommand{\E}[1]{\mathbb{E} \b{#1}}
\newcommand{\EM}[2]{\mathbb{E}_{#1} \b{#2}}
\newcommand\numberthis{\addtocounter{equation}{1}\tag{\theequation}}
\newcommand{\prob}{{\textsc{Hiring with Uncertainty}}}

\newtheorem{lemma}{Lemma}
\newtheorem{example}{Example}
\newtheorem{claim}[lemma]{Claim}
\newtheorem*{claim*}{Claim}

\makeatletter
\renewcommand*{\ALG@name}{Procedure}
\makeatother

\begin{document}

\twocolumn[
\icmltitle{Hiring Under Uncertainty}\

\icmlsetsymbol{equal}{*}

\begin{icmlauthorlist}
\icmlauthor{Manish Raghavan}{cornell}
\icmlauthor{Manish Purohit}{goo}
\icmlauthor{Sreenivas Gollapudi}{goo}
\end{icmlauthorlist}

\icmlaffiliation{cornell}{Department of Computer Science, Cornell University}
\icmlaffiliation{goo}{Google, Inc.}

\icmlcorrespondingauthor{Manish Raghavan}{manish@cs.cornell.edu}

\icmlkeywords{Machine Learning, IML}

\vskip 0.3in
]

\printAffiliationsAndNotice{}

\begin{abstract}
In this paper we introduce the hiring under uncertainty problem to model the questions faced by hiring committees in large enterprises and universities alike. Given a set of $n$ eligible candidates, the decision maker needs to choose the sequence of candidates to make offers so as to hire the $k$ best candidates. However, candidates may choose to reject an offer (for instance, due to a competing offer) and the decision maker has a time limit by which all positions must be filled. Given an estimate of the probabilities of acceptance for each candidate, the hiring under uncertainty problem is to design a strategy of making offers so that the total expected value of all candidates hired by the time limit is maximized. We provide a 2-approximation algorithm for the setting where offers must be made in sequence, an 8-approximation when offers may be made in parallel, and a 10-approximation for the more general stochastic knapsack setting with finite probes.
\end{abstract}

\section{Introduction}
\label{sec:intro}

Hiring is a core activity of any enterprise where the timely fulfillment of staffing needs is critical to its functioning. In addition to estimating the quality and suitability of a candidate, the enterprise also needs to deal with uncertainty that arises as the result of good candidates rejecting the job offer. Balancing this trade-off between hiring good quality candidates while at the same time ensuring that all staffing needs are met by a deadline is one of the most challenging aspects of hiring in practice.

A number of algorithmic questions that are inspired by hiring settings have been well-studied in literature (see Section \ref{sec:rw}) including the popular \emph{secretary problem} and its many variants. This line of work focuses on the online nature of the problem and the key question tackled is how to find a good set of candidates when the pool of future candidates is unknown. However, this line of research does not model the other source of uncertainty, i.e., the candidate itself may choose to reject the job offer (for instance, due to a better competing offer), which in turn raises the question of hiring enough candidates by the deadline.

During the hiring process, the ``quality'' (or value) of a candidate is often estimated by traditional hiring processes such as resume screening and formal interviews and even via algorithmic techniques (see Section \ref{sec:rw}). On the other hand, machine learning models can estimate the probability that a given candidate will accept a job offer based on various features such as the candidate's educational background, salary expectations, location preferences, and so on.
Considering both the value as well as offer acceptance probability of each candidate leads to a rich collection of optimization problems. In this paper, we initiate the study of the hiring under uncertainty problem that aims to tackle the inherent trade-off at the heart of the hiring process - {\em how should we make job offers under uncertainty so that all staffing needs are met by a deadline, and yet hire the best available candidates}?

Formally, we consider the following model as the basis for all the variants we present in this paper. There is a set of $n$ candidates, and we need to hire $k$ of them. We do this by making offers to candidates, which we'll also refer to more abstractly as ``probing'' a candidate. Each candidate $i$ has a known value $v_i$ and probability $p_i$ of accepting an offer, independent of all other candidates. We have a deadline of $t$ time steps, after which we can't make any further offers. It takes one time step to make an offer and receive an answer from a candidate. Job offers are irrevocable, i.e., once a candidate accepts an offer, that position is ``filled'' and we cannot replace that candidate with a better candidate in the future. The total value of a chosen set of candidates is simply the sum of the individual candidate values. Our goal is to maximize the total expected value of the hired candidates. We also consider two natural generalizations of this model. First, we allow making parallel offers to multiple candidates in a given time step. Second, we consider the knapsack hiring problem where each candidate $i$ has a size $s_i$ and we have a budget $B$ on the total size of hired candidates. The knapsack hiring problem models the scenario when the enterprise has a fixed budget and different candidates need to be offered different salaries.

We note that in all settings, we do not require $v_i$ to be known precisely; all of our results hold if $v_i$ is only known in expectation.
However, our results are sensitive to errors in $p_i$ and $s_i$.
Making them robust to such errors is an interesting subject for future work.

\subsection{Our Contributions}
We summarize our contributions in this study.
\begin{itemize}
    \vspace{-0.1in}
    \item In Section \ref{sec:star}, we offer a $2$-approximation algorithm for hiring $k$ candidates with a constraint of making at most $t$ sequential offers.
    \vspace{-0.1in}
    \item In Section \ref{sec:parallel}, we consider the parallel offers model where we are allowed to make as many parallel offers each time step as the number of unfilled positions remaining and design a  $8$-approximation algorithm. 
    \vspace{-0.1in}
    \item In Section \ref{sec:knapsack10}, we present a $10$-approximation for the {\em knapsack hiring} problem where each candidate has a different size and the decision-maker is constrained by a total budget (as opposed to hiring $k$ candidates). 
    \vspace{-0.1in}
    \item We offer a connection to other stochastic optimization problems such as stochastic matching and present a lower-bound for the stochastic matching problem.
    \vspace{-0.1in}
    \item Finally, we show the efficacy of our algorithms using simulations on data drawn from different distributions.
\end{itemize}

\subsection{Related Work}
\label{sec:rw}
Theoretical questions inspired by hiring scenarios have long been studied in the
online setting under the names of optimal stopping or ``secretary''
problems~\cite{dynkin1963optimum,chow1964optimal}. A few extensions of this
setting incorporate elements of our model. Kleinberg considers the
case of hiring multiple candidates instead of the traditional single-hire
case~\cite{kleinberg2005multiple}. An older line of work considers a version of
the secretary problem in which candidates may stochastically reject offers,
although this is typically modeled as a fixed rejection probability~\cite{smith1975secretary,tamaki1991secretary,tamaki2000minimal,ano2000note}.

In addition, more recent work on stochastic optimization considers a variety of
related problems in the offline setting. This includes stochastic versions of
submodular optimization~\cite{asadpour2008stochastic,gupta2017adaptivity},
knapsack~\cite{dean2004approximating,dean2005adaptivity,bhalgat2011improved},
bandits~\cite{gupta2011approximation,ma2014improvements}, and
matching~\cite{bansal2010lp,adamczyk2015improved,baveja2018improved}. Some
special cases of our model (specifically, when one candidate is being hired) can
be considered a special case of matching, and in fact, the results we derive
here will provide lower bounds for stochastic matching. However, our model
cannot in general be captured by any of these prior works.

Algorithmic and data-driven approaches to hiring have become increasingly common
with the rise of machine learning~\cite{miller2015can,carmichael2015hiring}. In
particular, there is a long line of work focused on predicting teacher quality
from
data~\cite{kane2008estimating,dobbie2011teacher,chalfin2016productivity,jacob2018teacher}.
More broadly,~\citet{mullainathan2017machine} describe the integration of
machine learning with traditional econometric techniques to better estimate
quantities like employee performance. Furthermore, studying the gig economy,
\citet{kokkodis2015hiring} use machine learning to estimate the likelihood that
freelancers get hired.

\section{Hiring Problem: How to fill $k$ positions sequentially?}
\label{sec:star}

In this section, we consider the basic hiring problem where we want to hire $k$ employees out of $n$ potential candidates with a constraint of making at most $t$ sequential offers.

\subsection{Special case: Hiring a single employee ($k = 1$)}
\label{sec:singlecandidate}

To develop some intuition about the problem as well as to illustrate some of the challenges posed, we begin with the case where $k=1$, i.e.\ , we only want to hire one candidate.

One might hope that a simple greedy algorithm is optimal in this special case. Unfortunately, as we will show, a number of seemingly natural greedy algorithms\footnote{For instance, sorting the candidates by decreasing $p_i$, $v_i$, or $p_i \cdot v_i$ and then making at most $t$ offers until one accepts.} do not yield optimal solutions. 

However, we can still take advantage of structural properties of the solution. In particular, given a set of $t$ candidates, the optimal order in which to make offers to them is in decreasing order of $v_i$. To see why, for any two candidates $i$ and $j$, consider the four possible outcomes of making offers to them: both $i$ and $j$ accept, both reject, $i$ accepts and $j$ rejects, and vice versa. The only outcome in which the order of offers matters is when they both accept, since the position will go to the candidate receiving first offer, and the second offer will never be made. In this case, it is clearly better to make the first offer to the candidate with higher value.

Since the optimal algorithm must always make offers to candidates in decreasing order by value, we can write a dynamic program to compute the optimal subset of $t$ candidates to potentially make offers to.
Assume the candidates are sorted in non-increasing order of $v_i$, i.e., $v_1 \geq v_2 \geq \ldots \geq v_n$. Let $S(i, s)$  be the optimal expected value that can be achieved with $s$ time steps remaining by only considering candidates $i$ through $n$. Then, we have the recurrence
\[
S(i, s) = \max\{p_i v_i + (1-p_i) S(i+1, s-1), S(i+1, s)\}
\]
where the two terms correspond to either making an offer to candidate $i$ or not. Note that $S(1, t)$ then gives the value of the optimal solution, and the offer strategy can be found by retracing the choices of the dynamic program.

\subsection{General Problem: Hiring $k$ employees ($k > 1$)}
While the $k=1$ case admits a clean solution, the general case where $k > 1$ is more complex.
We first note that a simple $k$-approximation exists: using the dynamic program from Section~\ref{sec:singlecandidate}, we know how to optimally fill a single slot. Doing so yields a candidate who is in expectation at least as good as any of the $k$ candidates hired by the optimal strategy.

In general, the optimal solution may display several non-monotonicities that make it difficult to extend the $k=1$ solution. 

\begin{example}
\label{ex:non_monotone}
Consider the following instance with $n = 4$, $t = 3$, and $k = 2$.
\begin{align*}
    (p_1, v_1) &= (1, 1) &
    (p_2, v_2) &= (0.5, 1) \\
    (p_3, v_3) &= (0.5, 1) &
    (p_4, v_4) &= (0.1, 2)
\end{align*}
\end{example}
We will show that in the optimal strategy, the offers made are not necessarily monotone in acceptance probability, value, or expected value.
First, note that any deterministic strategy can be represented as a binary decision tree, where each node in the tree corresponds to a candidate to whom an offer is made.
The two branches are the resulting strategies if the offer is accepted or rejected. Taking the convention that the right branch corresponds to acceptance, the optimal solution for the above instance is as shown in Figure~\ref{fig:ex_sol}.

\begin{wrapfigure}{l}{.19\textwidth}
\centering
    \begin{tikzpicture}[scale=.6]
    \node (2) at (0, 0) {2};
    \node (4) at (2, -1) {4};
    \node (1l) at (.8, -2) {1};
    \node (3) at (-2, -1) {3};
    \node (1m) at (-.8, -2) {1};
    \node (1r) at (-3.2, -2) {1};
    \draw[green]
    (2) -- (4)
    (3) -- (1m)
    ;
    \draw[red]
    (2) -- (3)
    (4) -- (1l)
    (3) -- (1r)
    ;
    \end{tikzpicture}
\caption{An optimal solution to Example~\ref{ex:non_monotone}}
\label{fig:ex_sol}%
\end{wrapfigure}
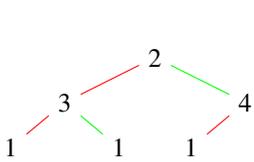

Note that there are several counter-intuitive effects at play here. First, despite having the lowest acceptance probability and expected value, candidate 4 still receives an offer with probability $1/2$. Second, the candidate with the highest expected value (candidate 1) receives an offer either last or not at all. Finally, despite the fact that candidates 1, 2, and 3 all have the same value, it is strictly optimal to make an offer to candidate 2 (or 3) before candidate 1, even though candidate 1 accepts with higher probability.

Thus, unlike in the $k=1$ scenario, the optimal solution may not be value-ordered, so the dynamic programming approach discussed above cannot be optimal here. We conjecture that this problem is NP-hard for general $k$.
In the remainder of this section, we present an approximation algorithm that runs in polynomial time and yields at least half of the optimal expected value.
We first show that there exists a non-adaptive algorithm yielding a 2-approximation.
Then, we show that a dynamic program similar to that in Section \ref{sec:singlecandidate} gives an adaptive algorithm that is better than \emph{any} non-adaptive algorithm, and hence is also a 2-approximation.

\subsubsection{Establishing an adaptivity gap of 2}
\citet{gupta2017adaptivity} study adaptivity gaps for stochastic probing problems where the goal is to maximize a given submodular function (or XOS function) over the set of active, probed elements.
In this setting, each element $e$ is active independently with probability $p_e$ and the set of elements that are probed must satisfy given prefix-closed constraints.
The \prob\ problem does not quite fit into their framework, since their framework allows one to choose the ``best'' set of active, probed elements, while in our setting we are forced to hire the \emph{first} $k$ candidates that are active.
Nevertheless, we can leverage some insights from~\cite{gupta2017adaptivity} to show an adaptivity gap of 2 (as opposed to 3 obtained by them for stochastic probing).

Similar to the one shown in Figure~\ref{fig:ex_sol}, the optimal solution to any instance can be represented by a binary tree $\tree$. Each node $u$ of $\tree$ corresponds to a candidate $i$ (denoted by $cand(u)$) and has two outgoing edges leading to subtrees in case the candidate $i$ is active (happens with probability $p_i$) or inactive (happens with probability $(1-p_i)$). Any root to leaf path in this tree represents the sequence of offers made by the optimal algorithm in a particular realization. The tree $\tree$ naturally defines a probability distribution $\pi_\tree$ over root to leaf paths - start at the root and at each node $u$, follow the ``yes'' edge with probability $p_i$ where $i = cand(u)$ and the ``no'' edge otherwise.
Since the optimal strategy can make offers to at most $t$ candidates, any such path must have at most $t$ nodes.
\begin{wrapfigure}{r}{0.17\textwidth}
\centering
    \begin{tikzpicture}[scale=.5,thick]
    \node (a) at (0, .3) {$A$};
    \node (b) at (-1, 0) {$B$};
    \node (c) at (-2, -0.5) {$C$};
    \node (d) at (-1, -1.5) {$D$};
    \node (e) at (-3, -2) {$E$};
    \node (f) at (-2, -3) {$F$};
    \node (g) at (-2.8, -3.5) {$G$};
    \node (h) at (-3.8, -4) {$H$};
    \node (i) at (-4.8, -4.5) {$I$};
    \draw[red,->] (0, 0) -> (-1, -.5);
    \draw[red,->] (-1, -.5)-> (-2, -1) ;
    \draw[green,->] (-2, -1) -> (-1.5, -1.5) ;
    \draw[red,->] (-1.5, -1.5)-> (-2.5, -2) ;
    \draw[green,->] (-2.5, -2) -> (-2, -2.5) ;
    \draw[red,->] (-2, -2.5)-> (-3, -3) ;
    \draw[red,->] (-3, -3)-> (-4, -3.5) ;
    \draw[red,->] (-4, -3.5)-> (-5, -4) ;
    \end{tikzpicture}
    \caption{A path with 3 segments: $ABC$, $DE$, and $FGHI$.}
    \label{fig:segment}%
\end{wrapfigure}
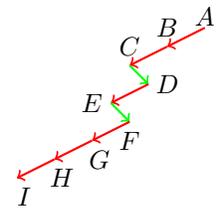
Further, since any strategy can only hire at most $k$ candidates, any root to leaf path in $\tree$ must have at most $k-1$ ``yes'' edges. Thus any root to leaf path $P$ can be decomposed into at most $k$ ``segments'' where a segment is a maximal sub-path composed of only ``no'' edges as shown in Figure~\ref{fig:segment}. Let $\mathsf{segments}(P) = \{S_1, S_2, \ldots, S_\ell\}$ denote the set of segments in $P$. For each segment $S \in \mathsf{segments}(P)$, let $\mathsf{last}(S)$ denote the \emph{last} node on segment $S$.

Given the optimal tree $\tree$, Procedure~\ref{alg:app_given_tree} samples a single path $P$ according to the distribution $\pi_\tree$ and then probes the candidates on each segment of $P$ in descending order by value to hire at most one candidate from each segment. In the rest of this section, we show that Procedure~\ref{alg:app_given_tree} yields at least half of the total expected value of $\tree$ in expectation.

\begin{algorithm}
\begin{algorithmic}[1]
\caption{ApproxGivenTree$(\tree)$}
\label{alg:app_given_tree}
\STATE $P \gets$ a random path sampled from $\pi_{\tree}$
\STATE $S_1, \dots, S_\ell \gets P$ divided into at most $k$ segments \\
\COMMENT{Each $S_j$ is a list of candidates}
\FOR {$j \gets 1, \dots, \ell$}
\STATE $S_j' \gets S_j$ sorted in decreasing order of value
\FOR{each candidate $i$ in $S_j'$}
\STATE Make an offer to candidate $i$
\IF{$i$ accepts}
\STATE \textbf{break}
\ENDIF
\ENDFOR
\ENDFOR
\end{algorithmic}
\end{algorithm}

Let $\val(\tree) = \EM{P \sim \pi_\tree}{\sum_{j = 1}^{\ell} v_{\mathsf{last}(S_j)}}$ be the total expected value of the tree $\tree$ (note that $\ell$ is a random variable). Similarly, let $\algo(\tree)$ be the expected value obtained by Procedure~\ref{alg:app_given_tree} on tree $\tree$. 
For any segment $S_j$, we define $\seg(S_j)$ to be the expected value of the active candidate from $S_j$ with largest value. Formally, if $S_j$ consists of candidates $\{1, 2, \ldots, |S_j|\}$ sorted in non-increasing order of their values, then 
\[\seg(S_j) \triangleq \sum_{i=1}^{|S_j|} p_i v_i \prod_{j < i} (1-p_j).\]
We observe that $\algo(\tree) = \EM{P \sim \pi_\tree}{\sum_{j = 1}^\ell \seg(S_j)}$. In other words, Procedure~\ref{alg:app_given_tree} obtains the value of the active element with the largest value in each segment. The following lemma shows that in expectation, this is a 2-approximation to $\val(\tree)$.

\begin{lemma}
\label{lem:seq_approx}
\[
\EM{P \sim \pi_\tree}{\sum_{j = 1}^\ell \seg(S_j)} \ge \frac{1}{2} \EM{P \sim \pi_\tree}{\sum_{j = 1}^{\ell} v_{\mathsf{last}(S_j)}}
\]
\end{lemma}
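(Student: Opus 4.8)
The plan is to prove the inequality by decomposing $\tree$ into its maximal chains of ``no'' edges and establishing the bound chain by chain. Every segment appearing in a sampled path is a prefix of some maximal rejection chain $c_1 \to c_2 \to \cdots \to c_M$ of $\tree$ (the chain of ``no'' children emanating from a fixed entry node), and a prefix $\{c_1,\dots,c_i\}$ shows up as a segment exactly when the path reaches $c_i$ and then leaves the chain --- i.e. when $c_i$ accepts (for internal $c_i$) or when $c_i$ is the terminal leaf. Writing $r_i = \prod_{l<i}(1-p_{c_l})$ for the probability of reaching $c_i$ and $a_i = r_i p_{c_i} = r_i - r_{i+1}$ for the probability that $c_i$ is the first acceptor on the chain, linearity of expectation lets me factor out the probability of entering the chain and reduces the lemma to the per-chain statement $A \ge \tfrac12 V$, where
\[
A = \sum_{i=1}^{M-1} a_i\,\seg(\{c_1,\dots,c_i\}) + r_M\,\seg(\{c_1,\dots,c_M\}), \qquad V = \sum_{i=1}^{M} a_i\,v_{c_i}.
\]
I would stress the asymmetry in the weights: Procedure~\ref{alg:app_given_tree} reprobes the terminal prefix whenever the path merely \emph{reaches} the leaf, so that segment carries weight $r_M$ rather than $a_M$; this extra weight is exactly what makes a factor of $2$ (rather than something depending on $p_{c_M}$) attainable, as the one-node chain already shows.

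Next I would strip away the values by a layer-cake decomposition. Writing $g_i := \seg(\{c_1,\dots,c_i\}) = \int_0^\infty \Pr[\exists\,l\le i:\ v_{c_l}\ge\theta,\ c_l\text{ active}]\,d\theta$ and $V = \int_0^\infty \Pr[\text{the chain's first acceptor has value}\ge\theta]\,d\theta$, it suffices to prove, for every threshold $\theta$ with high-value set $H=\{l: v_{c_l}\ge\theta\}$, the purely probabilistic inequality $\Pr[L]\ge\tfrac12\Pr[R]$, where $R$ is the event that the chain's first acceptor lies in $H$ and $L$ is the event that, in an independent reprobe, some $H$-candidate in the reached prefix is active. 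Enumerating $H=\{h_1<h_2<\cdots\}$ and conditioning on the first reprobed $H$-candidate that is active, one obtains the closed forms $\Pr[L]=\sum_j t_j\,\alpha_j$ and $\Pr[R]=\sum_j\alpha_j$, where $t_j=\prod_{i<j}(1-p_{h_i})$ and $\alpha_j = r_{h_j}p_{h_j}$ is the probability that $h_j$ is the chain's first acceptor.

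The heart of the argument is then the inequality $\sum_j t_j\alpha_j \ge \tfrac12\sum_j\alpha_j$. The clean special case is when every chain element is above threshold: there $\alpha_j = t_j - t_{j+1}$ telescopes, and since $t_j\ge\tfrac{t_j+t_{j+1}}2$ I get $t_j(t_j-t_{j+1})\ge\tfrac12(t_j^2-t_{j+1}^2)$, whose sum telescopes to $\tfrac12(1-t_{M+1}^2)\ge\tfrac12(1-t_{M+1})=\tfrac12\sum_j\alpha_j$. For the general case I would write $\alpha_j=\rho_j(t_j-t_{j+1})$ with $\rho_j=\prod_{l<h_j,\,l\notin H}(1-p_{c_l})$ decreasing in $j$, reduce the claim to $\sum_j\rho_j(t_j-t_{j+1})\bigl(t_j-\tfrac12\bigr)\ge0$, and close it with the same quadratic bound together with Abel summation, using that $f(t)=t^2-t\le0$ on $[0,1]$, $f(t_1)=f(1)=0$, and $\rho_j$ is monotone.

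The step I expect to be the main obstacle is precisely the reconciliation of the two orderings: the optimal tree probes a chain in an arbitrary order and banks the value of its \emph{first} acceptor, whereas Procedure~\ref{alg:app_given_tree} reprobes in \emph{value} order and banks the \emph{largest-value} active candidate, but only over a randomly truncated prefix. Any naive charging --- comparing a segment's $\seg$ value to $v_{\mathsf{last}}$ directly, or lower-bounding $g_i$ by the tree-order first-acceptor value --- loses more than a factor of $2$, and a single low-probability node already breaks it. The layer-cake reduction followed by the second-moment/telescoping estimate $t(t-t')\ge\tfrac12(t^2-t'^2)$ is what recovers exactly the factor of $2$, and getting the terminal-segment weighting right is essential for the bound to hold.
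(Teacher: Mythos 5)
Your proposal is correct and follows essentially the same route as the paper: reduce to a single rejection chain (the paper does this by induction over segments, you by a global chain decomposition), apply a layer-cake over value thresholds, and close with the telescoping quadratic estimate $t_j(t_j - t_{j+1}) \ge \tfrac{1}{2}(t_j^2 - t_{j+1}^2)$, which is exactly Claim 3.4 of Gupta et al.\ that the paper cites --- you instead re-derive it in a weighted form, absorbing the sub-threshold elements into the monotone factors $\rho_j$ and closing with Abel summation, where the paper conditions on indicator realizations $\mathbf{1}_{q_j}$ for those elements. Your explicit terminal-segment bookkeeping (weight $r_M$ on the algorithm's side versus $a_M$ on the optimum's side) matches the convention implicit in the paper's computation of $\E{v_I}$ via the first-acceptor distribution, and is in fact a more careful treatment of $v_{\mathsf{last}(S_\ell)}$ than the paper makes explicit.
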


\begin{proof}
We proceed by induction over the segments.
Let $I$ be $\mathsf{last}(S_1)$,
and let $J$ be the random variable denoting the index of the first active candidate on $S_1$, so $J \le I$. If no candidate on this segment is active, we'll say $J = 0$. 

In the base case, $\ell=1$. Otherwise, using the inductive hypothesis,
\[
\EM{P \sim \pi_\tree}{\sum_{j = 2}^\ell \seg(S_j)} \ge \frac{1}{2} \EM{P \sim \pi_\tree}{\sum_{j = 2}^{\ell} v_{\mathsf{last}(S_j)}}.
\]
In either case, it suffices to show that $\E{v_J} \ge \frac{1}{2} \E{v_I}$.

This follows from Lemma 3.3 of \citet{gupta2017adaptivity} and the complete
proof is deferred to Appendix~\ref{app:proofs}.
\end{proof}

\paragraph*{Removing adaptivity.}
Note that Procedure~\ref{alg:app_given_tree} is adaptive, since it probes within a segment until it finds an active element. However, we can use it to argue about a simpler non-adaptive algorithm: pick a random path down the optimal tree $\tree$, sort all the items in it in decreasing order of value, and make offers in that order. This has value at least as large as $\algo(\tree)$ because for any realization of which elements are active and inactive, making offers in decreasing order of value is always beneficial. Thus, the adaptivity gap for this problem is at most $2$.

\subsubsection{A constructive 2-approximation}
In the above section, we have shown that there exists a non-adaptive algorithm whose total expected value is at least half of the expected value of the optimal algorithm. However, this algorithm relied on the knowledge of the optimal decision tree $\tree$ and is thus non-constructive.
We now design a polynomial time algorithm ($\seqalg$) whose expected value is at least the expected value of \emph{any non-adaptive algorithm}, and hence is also at least half the expected value of the optimal algorithm.

We observe that by definition, any non-adaptive algorithm must choose a fixed sequence of $t$ potential candidates and make offers to them in order until $k$ of them accept. Further, as discussed in Section \ref{sec:singlecandidate}, the optimal such algorithm must probe the candidates in non-increasing order by value.
However, using a dynamic programming strategy similar to that in Section \ref{sec:singlecandidate}, we can find the \emph{optimal} algorithm (not necessarily non-adaptive) that probes candidates in non-increasing order by value. This must be better than the optimal non-adaptive algorithm and hence is also a 2-approximation.

\paragraph*{Dynamic Program ($\seqalg$).}
 We again assume that the candidates are sorted in non-increasing order of their values $v_i$. Let $S(i, \ell, s)$ be the optimal expected value that can be achieved by hiring at most $\ell$ candidates in $s$ time steps by only considering candidates $i$ through $n$ in sorted order.
 We obtain the following recurrence:
\begin{align*}
S(i, \ell, s) = &\max\{p_i (v_i + S(i+1, \ell-1, s-1)) \numberthis \label{eq:mult_recurrence} \\
&+ (1-p_i) S(i+1, \ell, s-1), S(i+1, \ell, s)\}.
\end{align*}
where the two terms correspond to either making an offer to candidate $i$ or not.
Let $\seqalg$ denote the dynamic program constructed using the above recurrence. We abuse notation slightly and let $\seqalg_{k,t} = S(1, k, t)$ be the expected value obtained by this algorithm. 

Let $\seqopt_{k,t}$ be the value of the optimal adaptive strategy. Lemma~\ref{lem:seq_approx} shows that the optimal non-adaptive strategy is a 2-approximation to $\seqopt_{k,t}$. Because $\seqalg$ is at least as good as any non-adaptive strategy, we have that for a set of candidates $\candidates$, $\seqalg_{k,t}(\candidates) \ge \frac{1}{2} \seqopt_{k,t}(\candidates)$.

\paragraph*{A lower bound.}
It is an open question as to whether the above analysis is tight, i.e., whether this algorithm may actually be closer to optimal than a factor of two. 
However, 
by modifying the probabilities and values in Example~\ref{ex:non_monotone}, we
show in Appendix~\ref{app:gap} that no algorithm that provides a value-sorted solution (including $\seqalg$) can get more than $0.927$ of the optimal algorithm in general.

\section{Filling $k$ Positions in Parallel}
\label{sec:parallel}
In the previous section, we considered the problem of hiring with sequential offers. However, if we have $k$ positions to fill, we could in principle make $k$ offers per timestep. This is clearly more powerful than the sequential offer model, since any sequence of sequential offers is valid in the parallel model. We'll treat the constraint of filling $k$ positions as hard, meaning that if at a particular timestep there are $\ell < k$ remaining unfilled positions, we can only make $\ell$ offers at that time, though it would be an interesting future direction to consider a relaxed version in which we hire at most $k$ candidates with high probability.

Intuitively, the more slots remain available, the more offers can be made, which is beneficial when there are many high-value low-probability candidates. This means an optimal strategy must somehow balance the tension between two conflicting objectives: filling slots and maximizing the number of offers that can be made to risky candidates. The following example demonstrates this tension.

\begin{example}
\label{ex:parallel}
Consider the example with $n = 2t - 1$ candidates and $k=2$. $2t-2$ of the candidates have $p_i = 1/(2t-2)$ and $v_i = 1$, and the last candidate has $p_n = 1$ and $v_n = 1$.
\end{example}
Even though candidate $n$ will surely accept an offer, the optimal strategy here is to make offers to all of the low-probability candidates (2 at a time) until one of them accepts, and then to make an offer to candidate $n$, who will definitely accept.
As $t$ gets large, this yields value approximately $\frac{2e-1}{e} \approx 1.63$ in expectation,
Making an offer to candidate $n$ first can only get value approximately $\frac{2\sqrt{e}-1}{\sqrt{e}} \approx 1.39$, since we can only make one offer per timestep after we fill the first slot.
Thus, the order in which offers are made significantly impacts the overall value.

\paragraph*{An 8-approximation algorithm ($\paralg$).}
We now design $\paralg$, a constructive 8-approximation algorithm, drawing on the results in Section~\ref{sec:star}. The basic idea is to relax the parallel offer instance with $t$ timesteps to a sequential offer instance with $k \cdot t$ timesteps, solve this using $\seqalg_{k,kt}$, and use this solution to construct a solution to the original instance.

Given a set of candidates $\candidates$, let $\paropt_{k, t}(\candidates)$ be the expected value of the optimal solution with parallel offers, filling $k$ slots in $t$ timesteps. Then, $\paropt_{k, t}(\candidates) \le \seqopt_{k, kt}(\candidates)$, since any sequence of parallel offers can be done in sequence over $kt$ timesteps.

We can apply the dynamic programming algorithm $\seqalg$ from Section \ref{sec:star} to $\candidates$ to get a sequential-offer strategy over $kt$ timesteps yielding expected value at least $\frac{1}{2} \seqopt_{k, kt}(\candidates)$.
Let $\tree$ be the resulting decision tree. We'll show how to convert this sequential-offer decision tree over $kt$ timesteps into a parallel-offer strategy over $t$ timesteps.

\begin{algorithm}
\begin{algorithmic}[1]
\caption{ParallelFromSequential$(\tree)$}
\label{alg:p_from_s}
\STATE $P \gets$ a random path sampled from $\pi_\tree$
\STATE $S_1, \dots, S_\ell \gets$ the segments of $P$
\STATE $S_1', \dots, S_m' \gets$ segments split such that each has length at most $t$.
\STATE Sort each segment $S_j'$ in decreasing order of $v_i$.
\STATE Let $U$ be the indices of the $k$ segments with highest $\seg(\cdot)$
\FOR {$s \gets 1, \dots, t$}
\FOR {$j \in U$}
\STATE Make an offer to candidate $i$, the $s^{\text{th}}$ candidate from $S_j'$, at this timestep $s$.
\IF {$i$ accepts}
\STATE Remove $j$ from $U$
\ENDIF
\ENDFOR
\ENDFOR
\end{algorithmic}
\end{algorithm}

Let $\algt(\tree^*)$ be the expected value of the parallel-offer strategy produced by Procedure~\ref{alg:p_from_s}, where $\tree^*$ is the output of $\seqalg_{k,kt}(\candidates)$. Then, we have the following.
\begin{lemma}
\label{lem:par_approx}
\[
\algt(\tree^*) \ge \frac{1}{8} \paropt_{k, t}(\candidates)
\]
\end{lemma}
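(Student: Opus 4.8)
The plan is to bound $\algt(\tree^*)$ from below in two stages, tracking how much value is lost at each reduction from the parallel optimum down to Procedure~\ref{alg:p_from_s}. Since we already have $\paropt_{k,t}(\candidates) \le \seqopt_{k,kt}(\candidates)$ and $\seqalg_{k,kt}(\candidates) \ge \frac{1}{2}\seqopt_{k,kt}(\candidates)$, it remains only to show that the parallel strategy produced by Procedure~\ref{alg:p_from_s} recovers at least a $\frac{1}{4}$ fraction of $\algo(\tree^*) = \EM{P \sim \pi_{\tree^*}}{\sum_j \seg(S_j)}$, the value of the sequential tree. Chaining these three factors ($2 \times 4 = 8$) yields the claimed $\frac{1}{8}$ bound, so the entire burden is the last factor of $4$.

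To analyze that factor, I would trace the two structural operations in Procedure~\ref{alg:p_from_s}. First, line~3 splits each segment into pieces of length at most $t$. A segment of the sequential tree may have up to $kt$ nodes, so splitting it into length-$t$ chunks is necessary because in the parallel model each slot runs for only $t$ timesteps. The key claim here is that splitting does not destroy too much value: I would argue that for any segment $S$, if we split it into subsegments $S^{(1)}, \dots, S^{(r)}$, then $\sum_a \seg(S^{(a)}) \ge \seg(S)$, since the best active candidate in the full segment is at least as good as the best active candidate in whichever piece contains it (and the other pieces only add value). So splitting weakly \emph{increases} total segment value, costing nothing. The genuine loss comes from line~5: after splitting we may have more than $k$ subsegments, but only $k$ slots are available in parallel, so we greedily keep the $k$ subsegments of largest $\seg(\cdot)$. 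I would bound the value retained by this top-$k$ selection against the total; the natural claim is that the selected $k$ segments capture at least half the total segment value, because the number of subsegments is at most $2k$ — each original segment of length at most $kt$ splits into at most $k$ pieces, but more carefully one shows the count is controlled so that keeping the top $k$ loses at most a factor $2$.

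The main obstacle I anticipate is precisely controlling the number of subsegments and hence justifying the top-$k$ loss factor of $2$. A path in $\tree^*$ has at most $k-1$ ``yes'' edges and therefore at most $k$ segments, but each segment can be long (up to $kt$ nodes), so splitting into length-$t$ pieces can produce up to $k \cdot k = k^2$ subsegments in the worst case — far more than $2k$. The delicate point is that the total length of all segments on the path is at most $kt$ (the path has at most $kt$ nodes), so the total number of length-$t$ subsegments across \emph{all} segments is at most $k$ plus the number of splits, and the number of splits is bounded by the total length divided by $t$, i.e.\ at most $k$. This gives at most $2k$ subsegments total, which is exactly what makes the top-$k$ selection lose only a factor of $2$. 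I would make this counting argument carefully, showing $m \le 2k$ where $m$ is the number of subsegments, and then invoke that the top-$k$ of at most $2k$ nonnegative quantities sum to at least half the total.

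Finally I would verify that once the $k$ subsegments are fixed, running them in parallel over $t$ timesteps (line~6 onward) realizes exactly $\sum_{j \in U} \seg(S_j')$ in expectation: each subsegment has length at most $t$, is probed in decreasing order of value within its own slot, and contributes the expected value of its largest active candidate, which is by definition $\seg(S_j')$. Assembling the pieces, $\algt(\tree^*) \ge \frac{1}{2}\,\EM{P \sim \pi_{\tree^*}}{\sum_j \seg(S_j)} = \frac{1}{2}\algo(\tree^*) \ge \frac{1}{2}\cdot\frac{1}{2}\seqopt_{k,kt}(\candidates) \ge \frac{1}{4}\paropt_{k,t}(\candidates)$, and I would double-check whether the splitting/top-$k$ bookkeeping actually yields $\frac{1}{4}$ here or whether an additional factor of $2$ is needed to reach the stated $\frac{1}{8}$; reconciling the exact constant with the three-way chain is the last step to get right.
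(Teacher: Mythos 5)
Your structural argument is the paper's own: sample a path, split each segment into pieces of length at most $t$, count at most $2k$ pieces, keep the top $k$ by $\seg(\cdot)$ at a loss of at most a factor of $2$, and run the kept pieces in parallel, dropping a piece once its slot fills. Your counting is exactly the paper's ($\sum_j \lceil |S_j|/t \rceil \le \ell + \frac{1}{t}\sum_j |S_j| \le k + k = 2k$, using that the path has at most $k$ segments and at most $kt$ nodes), your observation that splitting cannot decrease $\sum_j \seg(\cdot)$ is correct and made explicit where the paper leaves it implicit, and your deterministic top-$k$-of-at-most-$2k$ averaging is a mild simplification of the paper's random-selection-plus-derandomization step. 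The verification that parallel execution realizes $\sum_{j \in U} \seg(S_j')$ is also sound: the number of live segments always equals the number of unfilled slots, so the schedule is feasible, and each segment of length at most $t$, probed in decreasing value order, contributes its best active candidate.

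However, there is one genuine gap, and it sits exactly where you flagged uncertainty at the end: you conflate $\algo(\tree^*) = \EM{P \sim \pi_{\tree^*}}{\sum_j \seg(S_j)}$ with ``the value of the sequential tree.'' The value of $\tree^*$ is $\val(\tree^*) = \EM{P \sim \pi_{\tree^*}}{\sum_j v_{\mathsf{last}(S_j)}} = \seqalg_{k,kt}(\candidates)$: on each segment the tree collects the value of the \emph{last} candidate, the one whose acceptance terminates the segment, not the best active candidate on the segment. Bridging these two quantities requires a \emph{second} application of Lemma~\ref{lem:seq_approx}, now to the tree $\tree^*$ itself, at a cost of another factor of $2$: $\EM{P \sim \pi_{\tree^*}}{\sum_j \seg(S_j)} \ge \frac{1}{2} \val(\tree^*)$. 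Because your final chain skips this step, the inequality $\algo(\tree^*) \ge \frac{1}{2}\seqopt_{k,kt}(\candidates)$ you assert is unjustified (only $\algo(\tree^*) \ge \frac{1}{4}\seqopt_{k,kt}(\candidates)$ is available), and the resulting conclusion $\algt(\tree^*) \ge \frac{1}{4}\paropt_{k,t}(\candidates)$ is stronger than the argument supports; relatedly, your framing that Procedure~\ref{alg:p_from_s} must recover a $\frac{1}{4}$ fraction of $\algo(\tree^*)$ misattributes the factor of $4$, which belongs relative to $\val(\tree^*)$, not to $\algo(\tree^*)$, from which only the top-$k$ factor of $2$ is lost. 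The correct assembly is a three-way chain of factors $2 \cdot 2 \cdot 2$: $\algt(\tree^*) \ge \frac{1}{2}\EM{P \sim \pi_{\tree^*}}{\sum_j \seg(S_j)} \ge \frac{1}{4}\val(\tree^*) = \frac{1}{4}\seqalg_{k,kt}(\candidates) \ge \frac{1}{8}\seqopt_{k,kt}(\candidates) \ge \frac{1}{8}\paropt_{k,t}(\candidates)$. The ``additional factor of $2$'' you suspected is precisely this second invocation of Lemma~\ref{lem:seq_approx}; once inserted, your proof coincides with the paper's.
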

\begin{proof}
By Lemma~\ref{lem:seq_approx},
\[
\seqalg_{k,kt}(\candidates) \ge \frac{1}{2} \seqopt_{k, kt}(\candidates) \ge \frac{1}{2} \paropt_{k, t}(\candidates).
\]
To complete the proof, we must show that $\algt(\tree^*) \ge \frac{1}{4} \seqalg_{k,kt}(\candidates)$. Note that Procedure~\ref{alg:p_from_s} yields an offer strategy in the parallel model, while $\tree^*$ represents a strategy in the sequential model with $kt$ timesteps.

First, observe that by applying Lemma~\ref{lem:seq_approx} again, if we could make offers along each segment of a random path down $\tree^*$ in decreasing order of value, we'd get a 2-approximation to $\seqalg_{k,kt}(\candidates)$, since we'd get the maximum active element on each segment. Since there are at most $k$ segments, we could make an offer to the highest valued candidate from each segment in the first time step and proceed down each segment in parallel, discarding a segment once a candidate accepts an offer.
However, since some segments may have length more than $t$, we may not have enough offers to go all the way down each segment. Consequently, in step 3 of Procedure~\ref{alg:p_from_s}, we partition the segments further so that each new segment contains at most $t$ candidates.
More formally, if a segment $S_j \in \mathsf{segments}(P)$ has length $at + b$ for some integers $a,b \geq 0$ and $b < t$, arbitrarily split the candidates in $S_j$ into $a+1$ new segments such that $a$ of those segments have exactly $t$ candidates.

Let $|S_j|$ be the size of the $j$th segment. When we split it up into new segments of length at most $t$, $S_j$ will be turned into $\lceil \frac{|S_j|}{t} \rceil$ segments. Thus, after splitting, the number of new segments is at most
\[
\sum_{j=1}^\ell \left\lceil \frac{|S_j|}{t} \right\rceil \le
\sum_{j=1}^\ell \frac{|S_j|}{t} + 1 \le
k + \frac{\sum_{j=1}^\ell |S_j|}{t} \le 2k.
\]

Thus, if we were to pick $k$ segments uniformly at random, each segment would have probability at least $1/2$ of being selected randomly into $U$, meaning that
\begin{align*}
    \sum_{j=1}^m \E{\seg(S_j') \Pr[j \in U]}
    &\ge \frac{1}{2} \sum_{j=1}^m \E{\seg(S_j')} \\
    &\ge \frac{1}{2} \sum_{j=1}^\ell \E{\seg(S_j)}
    \\
    & \ge \frac{1}{4} \E{\seqalg_{k,kt}(\candidates)},
\end{align*}
where the last inequality follows from Lemma~\ref{lem:seq_approx}.
We can derandomize this by choosing the $k$ segments with highest expected value, which must be at least as good as $k$ random segments. As a result, we have
\[
\algt(\tree^*) \ge \frac{1}{4} \E{\seqalg_{k,kt}(\candidates)} \ge \frac{1}{8} \paropt_{k, t}(\candidates).
\]
\end{proof}

Thus, our final algorithm (which we call $\paralg_{k,t}$, or just $\paralg$ when $k$ and $t$ are clear) is to first apply $\seqalg_{k,kt}$ to $\candidates$, producing a tree $\tree^*$, and build a parallel-offer strategy from $\tree^*$ with Procedure~\ref{alg:p_from_s}.
\begin{equation}
\label{eq:paralg_def}
    \paralg_{k,t}(\candidates) \triangleq \text{ParallelFromSequential}(\seqalg_{k,kt}(\candidates))
\end{equation}
As noted above, our final approximation factor will be $\min(k, 8)$, since we get a $k$-approximation simply by filling one position optimally.

\section{Knapsack Hiring Problem}
\label{sec:knapsack10}

We now consider the knapsack hiring problem that directly generalizes the vanilla hiring problem studied in Section \ref{sec:star}. In this case, in addition to a value $v_i$ and probability $p_i$, each candidate $i$ also has a size $s_i$. Instead of a number of slots $k$, we have a budget $B$ on the total size of the hired candidates. As earlier, we have a deadline of $t$ time steps and can make only one offer per time step.

The knapsack hiring problem is closely related to the well-studied stochastic knapsack problem~\cite{dean2004approximating,dean2005adaptivity}, which is as follows: we are given $n$ items, each with a known value and ``size distribution''. When an item is added to the knapsack, its size is drawn from this distribution. Once an item exceeds the capacity, this item must be discarded and no further items can be added to the knapsack. In the multidimensional version, both the capacity of the knapsack and the size of an item are vectors, and the process ends once any component of the vector capacity is exceeded. We observe that the two models differ slightly since in the knapsack hiring problem, both the value and size of an item (candidate) is stochastic.

We first give a reduction from the knapsack hiring problem to the multidimensional stochastic knapsack problem. For simplicity, we assume that the budget $B = 1$ without loss of generality. We construct an instance of 2-dimensional stochastic knapsack as follows - the knapsack capacity is $[1 ~ t]^\top$ where the first dimension represents the budget constraint  and the second dimension represents the number of allowed probes. The size of item $i$ is represented by the vector $[s_i ~ 1]^\top$ if the item exists when it is probed (happens with probability $p_i$) and $[0 ~ 1]^\top$ otherwise. The value $v_i'$ of item $i$ is set to the expected value obtained from candidate $i$, i.e., $v_i' = p_i v_i$.
With this reduction, the optimal solution to the knapsack hiring problem remains
unchanged if items deterministically contribute value $v_i' = p_i v_i$, as we
show in Appendix~\ref{app:equivalence}.

\citet{dean2005adaptivity} give a general $1+6d$-approximation to multidimensional stochastic knapsack, where $d$ is the number of knapsack constraints.
Directly applying this, we would get a 13-approximation in our 2-dimensional case. However, by leveraging the structure of the finite-probe problem, we can tighten this to a 10-approximation.

Without loss of generality, we assume that $s_i \le 1$ for all $i$ (otherwise the item would never fit in the knapsack).
We also normalize the number of probes to 1, so each item uses $1/t$ probes.
Let $\mu(i)$ denote the vector of expected size of item $i$, meaning $\mu(i) = [p_i s_i ~ ~ 1/t]^{\top}$. Let $\mu(S) = \sum_{i \in S} \mu(i)$. We use the notation $\mu_1(S)$ and $\mu_2(S)$ to denote the first and second components of $\mu(S)$ respectively. Further, let $\size(i)$ denote the vector of the realized size of item $i$ and let $\size(S) = \sum_{i \in S} \size(i)$.

\begin{algorithm}
\begin{algorithmic}[1]
\caption{KnapsackFiniteProbes$(p, v, s)$}
\label{alg:greedy_knapsack}
\STATE $m_1 \gets \max_i v_i' = \max_i p_i v_i$
\STATE $\mathcal{L} \gets$ the sequence of all items with $\|\mu(i)\|_1 \le 1/3$, sorted in non-increasing order of $\dfrac{v_i'}{\|\mu(i)\|_1}$
\STATE $m_{\mathcal{L}} \gets \sum_{i=1}^\ell v_i' (1 - \sum_{j \le i} \mu_1(j))$, where $\ell$ is the smallest integer such that $\sum_{i=1}^\ell \|\mu(i)\|_1 < 1$
\IF {$m_1 \ge m_{\mathcal{L}}$}
\STATE Probe the item with highest expected value $v_i'$
\ELSE
\STATE Probe the items in $\mathcal{L}$ until the knapsack is full
\ENDIF
\end{algorithmic}
\end{algorithm}

Our algorithm (Algorithm~\ref{alg:greedy_knapsack}) takes the better of two strategies: probing the item with highest expected value and probing a sequence of ``small'' items.
Exactly evaluating $m_{\mathcal{L}}^*$, the expected value of the second of these strategies, may be difficult; however, we can show that $m_{\mathcal{L}} = \sum_{i=1}^\ell v_i' (1 - \sum_{j \le i} \mu_1(j)) \le m_{\mathcal{L}}^*$.
We obtain $v_i'$ for item $i$ if and only if the first $i$ items in $\mathcal{L}$ all fit inside the knapsack.
Thus, $m_\mathcal{L}^* = \sum_{i=1}^\ell v_i' Pr[\|\size(\mathcal{L}_i)\|_\infty \leq 1]$, where $\mathcal{L}_i$ denotes the set of first $i$ items in $\mathcal{L}$.
By Claim \ref{clm:size}, $m_\mathcal{L}^* \geq m_{\mathcal{L}}$.
Note that Claim \ref{clm:size} applies since the constraint $\sum_{i=1}^\ell \|\mu(i)\|_1 < 1$ implies $\ell < t$.

\begin{claim}
\label{clm:size}
For any set $A$ of at most $t$ items, $Pr[\|\size(A)\|_\infty \leq 1] \ge 1 - \mu_1(A)$.
\end{claim}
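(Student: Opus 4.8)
The plan is to exploit the fact that only one of the two knapsack dimensions is actually random. First I would write $\|\size(A)\|_\infty = \max\{\size_1(A), \size_2(A)\}$, where $\size_1(A)$ and $\size_2(A)$ denote the first (budget) and second (probe) components of the realized aggregate size. The key observation is that the probe component is \emph{deterministic}: every probed item contributes exactly $1/t$ to the second coordinate regardless of whether it turns out to be active, so $\size_2(A) = |A|/t$. Since $A$ contains at most $t$ items, we have $\size_2(A) = |A|/t \le 1$ with certainty, and this is precisely where the hypothesis ``at most $t$ items'' is used. Consequently the event $\{\|\size(A)\|_\infty \le 1\}$ coincides with the purely one-dimensional event $\{\size_1(A) \le 1\}$, and it suffices to bound $\Pr[\size_1(A) > 1]$ from above by $\mu_1(A)$.

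Next I would note that $\size_1(A) = \sum_{i \in A} \size_1(i)$ is a sum of independent nonnegative random variables, where $\size_1(i)$ equals $s_i$ with probability $p_i$ and $0$ otherwise. Hence $\size_1(A)$ is itself nonnegative with
\[
\E{\size_1(A)} = \sum_{i \in A} p_i s_i = \mu_1(A).
\]
Applying Markov's inequality to $\size_1(A)$ at threshold $1$ then gives $\Pr[\size_1(A) \ge 1] \le \E{\size_1(A)} = \mu_1(A)$, and since $\{\size_1(A) > 1\} \subseteq \{\size_1(A) \ge 1\}$ we obtain $\Pr[\size_1(A) > 1] \le \mu_1(A)$. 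Taking complements yields $\Pr[\size_1(A) \le 1] \ge 1 - \mu_1(A)$, which by the reduction of the first paragraph is exactly the claimed bound on $\Pr[\|\size(A)\|_\infty \le 1]$.

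I do not expect a substantive obstacle here: the only point requiring care is the reduction in the first step, namely verifying that the probe dimension is deterministic and that the assumption $|A| \le t$ guarantees it never by itself violates the capacity, so that the infinity-norm event collapses to a single coordinate on which plain Markov suffices. Everything after that is a routine first-moment tail bound, and no independence-based concentration beyond the expectation is needed.
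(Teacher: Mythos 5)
Your proposal is correct and follows essentially the same argument as the paper: the probe coordinate $\size_2(A) = |A|/t$ is deterministic and at most $1$ since $|A| \le t$, so the event reduces to the budget coordinate, on which Markov's inequality gives $\Pr[\size_1(A) \ge 1] \le \E{\size_1(A)} = \mu_1(A)$. The only cosmetic difference is that the paper applies Markov in truncated form, $\Pr[\size_1(A) \ge 1] \le \E{\min\{\size_1(A),1\}}$, which coincides with your plain first-moment bound under the paper's standing assumption $s_i \le 1$.
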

See Appendix~\ref{app:proofs} for a proof.

Let $\greedy = \max\{m_1, m_\mathcal{L}\}$ be a lower bound on the expected value of Algorithm~\ref{alg:greedy_knapsack}.
Let $A$ be the random set of items that are probed by the optimal adaptive algorithm, and let $\opt$ be the expected value of the optimal algorithm.
\begin{lemma}[\cite{dean2005adaptivity}, Lemma 4.2 and Lemma 4.3]
\label{lem:deanknapsack}
$\opt \le (1 + 3\E{\|\mu(A)\|_1}) \greedy$.
\end{lemma}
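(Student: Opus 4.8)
The plan is to adapt the two-lemma framework of \citet{dean2005adaptivity} to our reduced $2$-dimensional instance, exploiting the fact that the algorithm's threshold of $1/3$ on $\|\mu(i)\|_1$ and the definitions of $m_1$ and $m_\mathcal{L}$ are engineered precisely to make the charging work. I would fix the optimal adaptive policy and let $A$ be its (random) set of probed items; conditioning on a realization turns $A$ into a fixed ordered list whose value is $\sum_i v_i' \mathbf{1}[i \text{ fits}]$. The key step is to split $A$ into \emph{large} items, those with $\|\mu(i)\|_1 > 1/3$, and \emph{small} items, those with $\|\mu(i)\|_1 \le 1/3$ (the same split used to build $\mathcal{L}$), and to bound each contribution separately, aiming to show the large part is at most $3\E{\|\mu(A)\|_1}\,\greedy$ and the small part is at most $\greedy$.

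The large-item bound is the clean, self-contained part. Since every large item consumes more than $1/3$ of the total mean mass, the number of large items probed is at most $3\|\mu(A)\|_1$, and each contributes value at most $m_1 \le \greedy$. Taking expectations over the realization immediately gives that the expected value the optimal policy extracts from large items is at most $3\E{\|\mu(A)\|_1}\,\greedy$; this is exactly where the constant $3$ in the statement originates, as it is $1/(1/3)$.

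The small-item bound is the crux and the step I expect to be the main obstacle. I would upper-bound the expected value the optimal policy extracts from small items by the value of the fractional greedy packing by density $v_i'/\|\mu(i)\|_1$ with a mean-mass budget of $1$, and then argue that $m_\mathcal{L}$ already captures this fractional value with constant $1$, so that no extra factor appears. This is where I would lean on \citet{dean2005adaptivity}: one needs an exchange/optional-stopping argument showing that, despite adaptivity, the density-ordered greedy fill upper-bounds the value any policy can collect from small items, together with Claim~\ref{clm:size} to convert the fit-probability of a prefix into the $\bigl(1 - \sum_{j \le i}\mu_1(j)\bigr)$ factors appearing in the definition of $m_\mathcal{L}$. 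Handling the $2$-dimensional aggregation through $\|\cdot\|_1$, so that the probe constraint and the budget constraint are treated uniformly, is the source of the remaining technical care, and is the reason the bound is stated for the combined mean mass $\|\mu(A)\|_1$ rather than per dimension.

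Combining the two bounds yields $\opt \le \greedy + 3\E{\|\mu(A)\|_1}\,\greedy = (1 + 3\E{\|\mu(A)\|_1})\,\greedy$. I note that this deliberately leaves $\E{\|\mu(A)\|_1}$ symbolic: the subsequent step bounds it by $3$ using that the realized mass is at most $2$ in the budget dimension (capacity plus one overflowing item of size $\le 1$) and at most $1$ in the probe dimension (at most $t$ probes, each of mean size $1/t$), and it is precisely this asymmetry that tightens the generic $(1+6d) = 13$ factor to $10$.
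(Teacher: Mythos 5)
Your attempt should first be set against what the paper actually does: the paper gives no proof of this lemma at all --- it is imported verbatim from \citet{dean2005adaptivity} (their Lemmas 4.2 and 4.3), and the paper's own contribution is only the surrounding reduction and the bound $\E{\|\mu(A)\|_1} \le 3$ that you correctly reproduce in your last paragraph. Your large-item step is sound and is indeed the source of the constant $3$: each probed item with $\|\mu(i)\|_1 > 1/3$ contributes expected value at most $m_1 \le \greedy$, and there are at most $3\|\mu(A)\|_1$ of them. But the other half of your decomposition --- ``the small part is at most $\greedy$'' --- is not merely the hard step you flag; it is false. Take $B = 1$, $n = t$ identical candidates with $s_i = 1$, $v_i = 1$, $p_i = \varepsilon$, and $t \gg 1/\varepsilon$ (say $\varepsilon = 1/\sqrt{t}$). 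Every item has $\|\mu(i)\|_1 = \varepsilon + 1/t \le 1/3$, so there are no large items and your split would force $\opt \le \greedy$. Yet $m_1 = \varepsilon$, the list $\mathcal{L}$ is cut off once its mean mass reaches $1$, i.e.\ after $\ell \approx 1/\varepsilon$ candidates, so $m_{\mathcal{L}} = \sum_{i=1}^{\ell} \varepsilon\p{1 - i\varepsilon} \to 1/2$ (even the true greedy value is $1 - (1-\varepsilon)^{\ell} \to 1 - 1/e$), while the optimal policy probes all $t$ candidates until the first acceptance and earns $1 - (1-\varepsilon)^{t} \to 1$. So $\opt$ extracts from small items alone nearly twice $\greedy$; the lemma survives only because $\E{\mu_1(A)} \approx 1$ here, i.e.\ the small items' value must itself be charged to the mass term $3\E{\|\mu(A)\|_1}\greedy$. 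Your second sub-step fails independently on the same instance: $m_{\mathcal{L}}$ does not capture the density-ordered fractional fill at mean mass $1$ ``with constant $1$,'' since the discount factors $\p{1 - \sum_{j \le i}\mu_1(j)}$ cost a factor approaching $2$.

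The charging in \citet{dean2005adaptivity} distributes the roles differently from your plan: the additive $1 \cdot \greedy$ in $(1 + 3\E{\|\mu(A)\|_1})\greedy$ pays for a \emph{single} item (the best or overflowing one, via $m_1$), not for the whole small class, while the mass-proportional term absorbs both large and small items. This goes through the fractional density relaxation $\Phi(x) = \max\{\sum_i v_i' x_i : \sum_i \|\mu(i)\|_1 x_i \le x,\ x_i \in [0,1]\}$: an induction (martingale) over the adaptive policy's decision tree bounds $\opt$ in terms of $\Phi$ evaluated at $\E{\|\mu(A)\|_1}$, and $\Phi$ is then compared to $\max\{m_1, m_{\mathcal{L}}\}$ using concavity and the $1/3$ threshold. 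Note also that Claim~\ref{clm:size} plays no role inside this lemma; the paper uses it only to certify $m_{\mathcal{L}} \le m_{\mathcal{L}}^*$, i.e.\ that $\greedy$ genuinely lower-bounds Algorithm~\ref{alg:greedy_knapsack}. Your closing computation ($\E{\mu_1(A)} \le 2$ by capacity plus one overflow, $\E{\mu_2(A)} \le 1$ by the probe count, whence the factor $10$) is correct and matches the paper, but as you note it lies outside the statement being proved.
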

For any adaptive algorithm, we can bound the expected size of the set of items probed using Lemma 2 from \citet{dean2004approximating}. In particular, we can bound the expected size of the first component as $\E{\mu_1(A)} \leq 2$. On the other hand, since the optimal adaptive algorithm can never probe more than $t$ items, $\E{\mu_2(A)} \leq 1$. Substituting these bounds into Lemma \ref{lem:deanknapsack} gives us the desired $10$-approximation:
\begin{align*}
  \opt &\le (1 + 3\E{\|\mu(A)\|_1}) \greedy \\
  &\le (1 + 3(\E{\mu_1(A) + \mu_2(A)})) \greedy\\
  &\le (1 + 3 \cdot 3) \greedy = 10 \cdot \greedy.
\end{align*}

\section{A Lower Bound for Stochastic Matching}
\label{sec:matching}
The hiring with uncertainty problem with $k=1$  can be viewed as a special case of the stochastic matching problem, which is as follows: given a graph $G = (V, E)$, probabilities $p_e$ and values $v_e$ for all $e \in E$, and patience parameters $t_v$ for all $v \in V$, the goal is to obtain a matching with maximum expected weight. As in our hiring problem, edges can be probed sequentially. If an edge $e$ is found to exist, it must be added to the matching, contributing value $v_e$. Each probe decreases the patience parameters of the incident vertices by 1, and when a vertex runs out of patience, it cannot be matched.

The state-of-the-art approach for this problem is to form a probing strategy by solving the linear program relaxation:
\begin{align*}
    \max_{x \in [0,1]^{|E|}} \sum_{e \in |E|} p_e v_e x_e && \text{s.t.} ~~ & \forall v ~ \sum_{e \in \delta(v)} x_e \le t_v \numberthis \label{eq:matching_lp} \\
    &&& \forall v ~ \sum_{e \in \delta(v)} p_e x_e \le 1
\end{align*}

This LP relaxation has been the primary approach for stochastic matching since~\citet{bansal2010lp}, yielding a 2.845-approximation for bipartite graphs~\cite{adamczyk2015improved} and a 3.224-approximation for general graphs~\cite{baveja2018improved}. However, little is known about the tightness of upper bound produced by the LP. Not only is there an integrality gap, but there is also a \textit{probing} gap -- the LP does not fully account for the random realizations of probes.

With $k=1$, the hiring problem is a special case of stochastic matching, since
it can be expressed as matching on a star-shaped graph. Thus, we can use it
provide a lower bound on the worst-case slack created by the LP. In
Appendix~\ref{app:lower_bound}, we provide an example showing that the gap between the LP value and the expected value of the optimal probing strategy must be at least $1 - 1/e$, meaning no probing strategy can approximate the optimal LP value to a factor better than $\frac{e}{e-1} \approx 1.581$.

\begin{figure*}%
\centering
\subfigure[Negative Correlation]{%
\label{fig:neg}%
\includegraphics[height=1.5in]{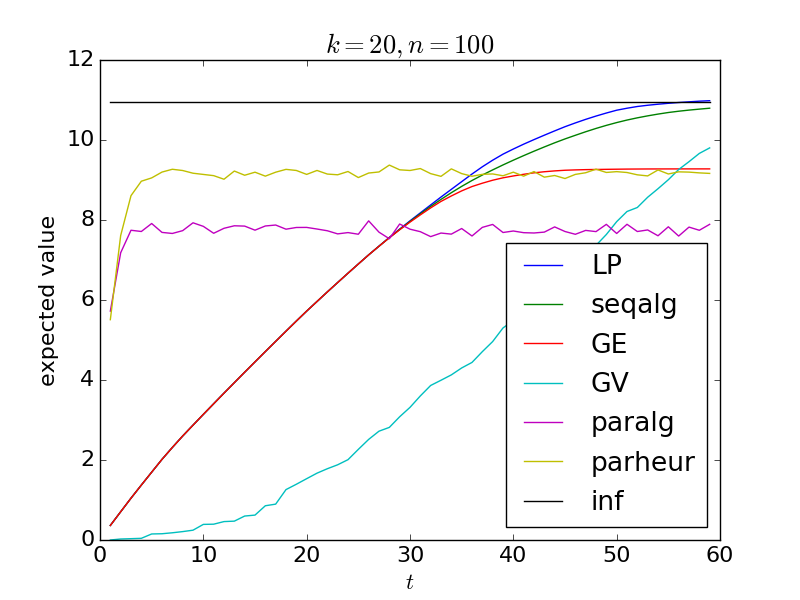}}%
\qquad
\subfigure[Positive Correlation]{%
\label{fig:pos}%
\includegraphics[height=1.5in]{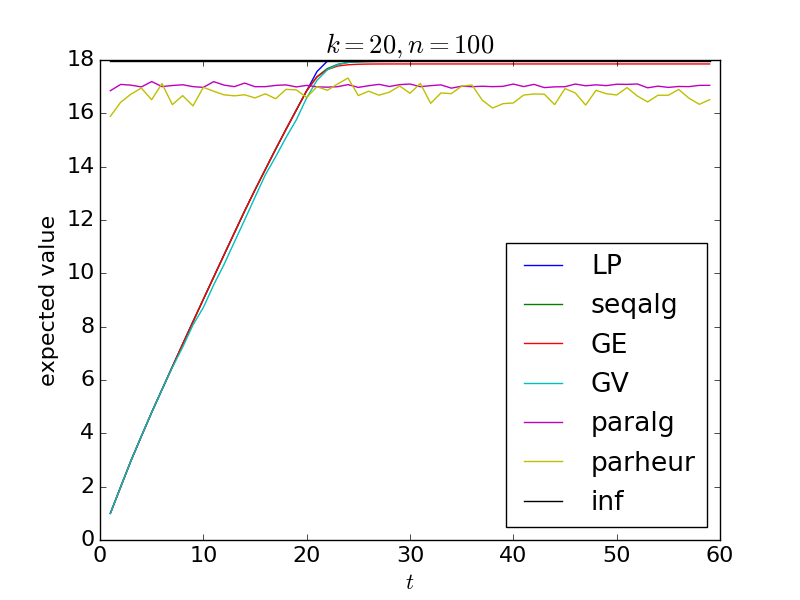}}%
\qquad
\subfigure[No Correlation]{%
\label{fig:no}%
\includegraphics[height=1.5in]{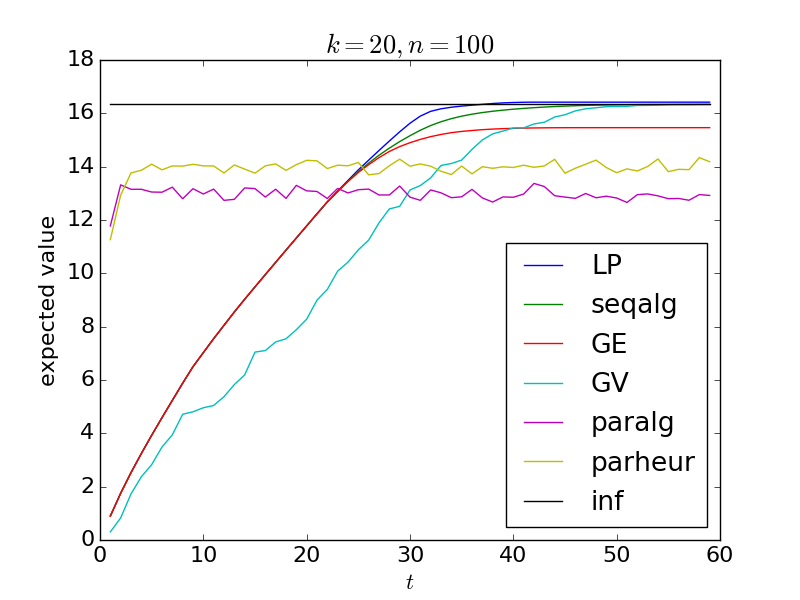}}%
\caption{Comparison of different algorithms on three simulated data sets.}
\end{figure*}

\section{Experiments}
\label{sec:experiments}

We test the performance of our algorithms for the \prob\ problem in both the sequential and parallel offers setting via simulations. We generate simulated data sets as follows. The values for $n = 100$ candidates are chosen uniformly at random from $[0, 1]$. We consider three models to generate the probabilities:
\begin{itemize}[topsep=0pt, partopsep=0pt, itemsep=-.5ex]
    \item \textbf{Negative correlation:} Higher-value candidates are less likely to accept offers. We sample $p_i$'s according to a Beta distribution, with $p_i \sim \text{Beta}(10(1-v_i), 10 v_i)$.
    \item \textbf{Positive correlation:} Higher-value candidates are more likely to accept: $p_i \sim \text{Beta}(10v_i, 10(1-v_i))$.
    \item \textbf{No correlation:} $p_i \sim \text{Uniform}[0,1]$.
\end{itemize}

On each of these data sets, we consider the performance of our three algorithms each with $k = 20$, namely
\begin{itemize}[topsep=0pt, partopsep=0pt, itemsep=-.5ex]
    \item {$\boldsymbol{\seqalg}$:} The dynamic programming algorithm from Section \ref{sec:star} to make $t$ sequential offers.
    \item {$\boldsymbol{\paralg}$:} The parallel approximation algorithm from Section \ref{sec:parallel}. We take the best solution over 100 random samples (of paths).
    \item {$\boldsymbol{\mathsf{parheur}}$: } We consider the following heuristic strategy to make offers in the parallel model. Note that the parallel approximation algorithm effectively partitions the set of candidates into up to $2k$ sets, selects the best $k$ of them, and makes offers to candidates in those sets in decreasing order of value. Our heuristic, then, is to randomly partition the set of candidates into $k$ disjoint sets and use the optimal single-slot solution from Section~\ref{sec:singlecandidate} on each set independently to decide which of them to make offers to. These offers can be made in parallel since the sets are disjoint.
\end{itemize}

For comparison, we include two natural greedy baselines.
We probe candidates in decreasing order of expected value $p_i \cdot v_i$ ($\boldsymbol{\mathsf{GE}}$) and value $v_i$ ($\boldsymbol{\mathsf{GV}}$).
We also plot two upper bounds on the value obtained by an optimal algorithm: $\boldsymbol{\mathsf{LP}}$, the value obtained by a natural LP relaxation (similar to \eqref{eq:matching_lp}), and $\boldsymbol{\mathsf{inf}}$, the optimal algorithm with $t = \infty$ (sort the candidates by decreasing value and make offers until $k$ candidates accept).

Figures~\ref{fig:neg},~\ref{fig:pos}, and~\ref{fig:no} demonstrate the performance of our algorithms on the three data sets with negative correlation, positive correlation, and no correlation respectively.
Beyond the theoretical guarantees, $\seqalg$ performs well empirically and dominates the greedy baselines, especially in the more natural setting where values and probabilities are negatively correlated.
$\seqalg$ is in general quite close to the LP upper bound -- much closer than the theoretical guarantee of 2. Thus, the LP is a fairly tight upper bound on the maximum value achievable by probing.

Even for moderately small values of $t$, $\seqalg$ outperforms $\paralg$, despite the fact that it makes 1 offer per time step when $\paralg$ makes multiple offers at a time.
Moreover, $\parheur$ almost always outperforms $\paralg$.
The relatively poor performance of $\paralg$ is to be expected.
Recall that $\paralg$ takes the solution tree to $\seqalg$ with $kt$ offers and probes candidates on the segments of a random path down this tree.
By construction, candidates on this path are sorted by value, so high-value candidates are concentrated in a small number of segments.
$\paralg$ can only select at most one candidate per segment, so it must ignore some high-value candidates.
In contrast, $\parheur$ partitions the candidates randomly, making it more likely that each set in the partition contains high-value candidates.

\subsection{Knapsack setting}
We also provide simulated results for a slightly modified version of the 10-approximation algorithm ($\boldsymbol{\mathsf{approx}}$) in the knapsack setting, where instead of calculating the lower bound $m_{\mathcal{L}}$ on the greedy strategy as in Algorithm~\ref{alg:greedy_knapsack}, we estimate the true expected value $m_{\mathcal{L}}^*$ by simulating runs of the greedy branch of the algorithm.
We compare the LP relaxation ($\boldsymbol{\mathsf{LP}}$) to our algorithm.
In addition, we compare to a natural greedy baseline ($\boldsymbol{\mathsf{greedy}}$), which probes items in decreasing order of $p_i v_i/s_i$.
We sample $s_i$ from a truncated Pareto distribution on $[0, 1]$, and sample $v_i \in [0, 1]$ from a Beta distribution positively correlated with $\sqrt{s_i}$.
We use $\sqrt{s_i}$ so that expected $v_i$'s exhibit diminishing returns in $s_i$.
We choose $p_i \sim \text{Uniform}[0, 1]$ and set a budget of $B = 1$.

\begin{figure}
    \centering
    \includegraphics[width=.35\textwidth]{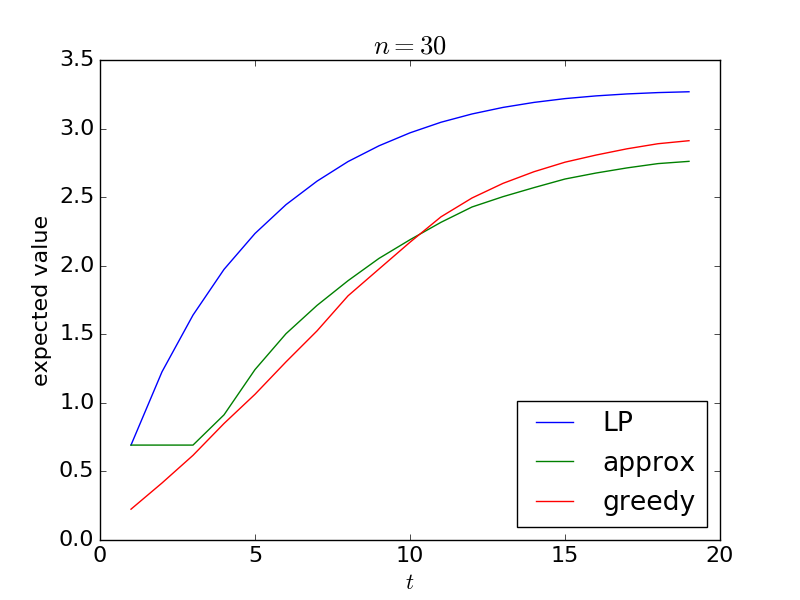}
    \caption{Experimental results for knapsack setting}
    \label{fig:knapsack_results}
\end{figure}

As the results in Figure~\ref{fig:knapsack_results} show, our algorithm performs roughly as well as $\boldsymbol{\mathsf{greedy}}$, but it does better for very small values of $t$.

\section{Conclusions}
\label{sec:conclusion}

With the increased use of data-driven techniques in hiring, predictions for employment outcomes are becoming increasingly accurate. Leveraging these predictions can be non-trivial, leading to the family of stochastic optimization problems we have considered here. As we have shown, imposing a finite number of offers can lead to highly complex solutions; however, by imposing an intuitive structure on the solution space, we are able to derive approximation algorithms that perform well, both theoretically and in practice.

\bibliography{bibfile}

\begin{thebibliography}{25}
\providecommand{\natexlab}[1]{#1}
\providecommand{\url}[1]{\texttt{#1}}
\expandafter\ifx\csname urlstyle\endcsname\relax
  \providecommand{\doi}[1]{doi: #1}\else
  \providecommand{\doi}{doi: \begingroup \urlstyle{rm}\Url}\fi

\bibitem[Adamczyk et~al.(2015)Adamczyk, Grandoni, and
  Mukherjee]{adamczyk2015improved}
Adamczyk, M., Grandoni, F., and Mukherjee, J.
\newblock Improved approximation algorithms for stochastic matching.
\newblock In \emph{Algorithms-ESA 2015}, pp.\  1--12. Springer, 2015.

\bibitem[Ano \& Ando(2000)Ano and Ando]{ano2000note}
Ano, K. and Ando, M.
\newblock A note on {B}russ' stopping problem with random availability.
\newblock \emph{Lecture Notes-Monograph Series}, pp.\  71--82, 2000.

\bibitem[Asadpour et~al.(2008)Asadpour, Nazerzadeh, and
  Saberi]{asadpour2008stochastic}
Asadpour, A., Nazerzadeh, H., and Saberi, A.
\newblock Stochastic submodular maximization.
\newblock In \emph{International Workshop on Internet and Network Economics},
  pp.\  477--489. Springer, 2008.

\bibitem[Bansal et~al.(2010)Bansal, Gupta, Li, Mestre, Nagarajan, and
  Rudra]{bansal2010lp}
Bansal, N., Gupta, A., Li, J., Mestre, J., Nagarajan, V., and Rudra, A.
\newblock When lp is the cure for your matching woes: Improved bounds for
  stochastic matchings.
\newblock In \emph{European Symposium on Algorithms}, pp.\  218--229. Springer,
  2010.

\bibitem[Baveja et~al.(2018)Baveja, Chavan, Nikiforov, Srinivasan, and
  Xu]{baveja2018improved}
Baveja, A., Chavan, A., Nikiforov, A., Srinivasan, A., and Xu, P.
\newblock Improved bounds in stochastic matching and optimization.
\newblock \emph{Algorithmica}, 80\penalty0 (11):\penalty0 3225--3252, 2018.

\bibitem[Bhalgat et~al.(2011)Bhalgat, Goel, and Khanna]{bhalgat2011improved}
Bhalgat, A., Goel, A., and Khanna, S.
\newblock Improved approximation results for stochastic knapsack problems.
\newblock In \emph{Proceedings of the twenty-second annual ACM-SIAM symposium
  on Discrete Algorithms}, pp.\  1647--1665. SIAM, 2011.

\bibitem[Carmichael(2015)]{carmichael2015hiring}
Carmichael, S.~G.
\newblock Hiring c-suite executives by algorithm.
\newblock \emph{Harvard Business Review}, 2015.

\bibitem[Chalfin et~al.(2016)Chalfin, Danieli, Hillis, Jelveh, Luca, Ludwig,
  and Mullainathan]{chalfin2016productivity}
Chalfin, A., Danieli, O., Hillis, A., Jelveh, Z., Luca, M., Ludwig, J., and
  Mullainathan, S.
\newblock Productivity and selection of human capital with machine learning.
\newblock \emph{American Economic Review}, 106\penalty0 (5):\penalty0 124--27,
  2016.

\bibitem[Chow et~al.(1964)Chow, Moriguti, Robbins, and
  Samuels]{chow1964optimal}
Chow, Y., Moriguti, S., Robbins, H., and Samuels, S.
\newblock Optimal selection based on relative rank (the “secretary
  problem”).
\newblock \emph{Israel Journal of mathematics}, 2\penalty0 (2):\penalty0
  81--90, 1964.

\bibitem[Dean et~al.(2004)Dean, Goemans, and
  Vondrd{\'a}k]{dean2004approximating}
Dean, B.~C., Goemans, M.~X., and Vondrd{\'a}k, J.
\newblock Approximating the stochastic knapsack problem: The benefit of
  adaptivity.
\newblock In \emph{Foundations of Computer Science, 2004. Proceedings. 45th
  Annual IEEE Symposium on}, pp.\  208--217. IEEE, 2004.

\bibitem[Dean et~al.(2005)Dean, Goemans, and Vondr{\'a}k]{dean2005adaptivity}
Dean, B.~C., Goemans, M.~X., and Vondr{\'a}k, J.
\newblock Adaptivity and approximation for stochastic packing problems.
\newblock In \emph{Proceedings of the sixteenth annual ACM-SIAM symposium on
  Discrete algorithms}, pp.\  395--404. Society for Industrial and Applied
  Mathematics, 2005.

\bibitem[Dobbie(2011)]{dobbie2011teacher}
Dobbie, W.
\newblock Teacher characteristics and student achievement: Evidence from teach
  for america.
\newblock \emph{Unpublished manuscript, Harvard University}, 2011.

\bibitem[Dynkin(1963)]{dynkin1963optimum}
Dynkin, E.~B.
\newblock The optimum choice of the instant for stopping a markov process.
\newblock \emph{Soviet Mathematics}, 4:\penalty0 627--629, 1963.

\bibitem[Gupta et~al.(2011)Gupta, Krishnaswamy, Molinaro, and
  Ravi]{gupta2011approximation}
Gupta, A., Krishnaswamy, R., Molinaro, M., and Ravi, R.
\newblock Approximation algorithms for correlated knapsacks and non-martingale
  bandits.
\newblock In \emph{Foundations of Computer Science (FOCS), 2011 IEEE 52nd
  Annual Symposium on}, pp.\  827--836. IEEE, 2011.

\bibitem[Gupta et~al.(2017)Gupta, Nagarajan, and Singla]{gupta2017adaptivity}
Gupta, A., Nagarajan, V., and Singla, S.
\newblock Adaptivity gaps for stochastic probing: Submodular and {XOS}
  functions.
\newblock In \emph{Proceedings of the Twenty-Eighth Annual ACM-SIAM Symposium
  on Discrete Algorithms}, pp.\  1688--1702. SIAM, 2017.

\bibitem[Jacob et~al.(2018)Jacob, Rockoff, Taylor, Lindy, and
  Rosen]{jacob2018teacher}
Jacob, B.~A., Rockoff, J.~E., Taylor, E.~S., Lindy, B., and Rosen, R.
\newblock Teacher applicant hiring and teacher performance: Evidence from dc
  public schools.
\newblock \emph{Journal of Public Economics}, 166:\penalty0 81--97, 2018.

\bibitem[Kane \& Staiger(2008)Kane and Staiger]{kane2008estimating}
Kane, T.~J. and Staiger, D.~O.
\newblock Estimating teacher impacts on student achievement: An experimental
  evaluation.
\newblock Technical report, National Bureau of Economic Research, 2008.

\bibitem[Kleinberg(2005)]{kleinberg2005multiple}
Kleinberg, R.
\newblock A multiple-choice secretary algorithm with applications to online
  auctions.
\newblock In \emph{Proceedings of the sixteenth annual ACM-SIAM symposium on
  Discrete algorithms}, pp.\  630--631. Society for Industrial and Applied
  Mathematics, 2005.

\bibitem[Kokkodis et~al.(2015)Kokkodis, Papadimitriou, and
  Ipeirotis]{kokkodis2015hiring}
Kokkodis, M., Papadimitriou, P., and Ipeirotis, P.~G.
\newblock Hiring behavior models for online labor markets.
\newblock In \emph{Proceedings of the Eighth ACM International Conference on
  Web Search and Data Mining}, pp.\  223--232. ACM, 2015.

\bibitem[Ma(2014)]{ma2014improvements}
Ma, W.
\newblock Improvements and generalizations of stochastic knapsack and
  multi-armed bandit approximation algorithms.
\newblock In \emph{Proceedings of the twenty-fifth annual ACM-SIAM symposium on
  Discrete algorithms}, pp.\  1154--1163. Society for Industrial and Applied
  Mathematics, 2014.

\bibitem[Miller(2015)]{miller2015can}
Miller, C.~C.
\newblock Can an algorithm hire better than a human.
\newblock \emph{The New York Times}, 25, 2015.

\bibitem[Mullainathan \& Spiess(2017)Mullainathan and
  Spiess]{mullainathan2017machine}
Mullainathan, S. and Spiess, J.
\newblock Machine learning: an applied econometric approach.
\newblock \emph{Journal of Economic Perspectives}, 31\penalty0 (2):\penalty0
  87--106, 2017.

\bibitem[Smith(1975)]{smith1975secretary}
Smith, M.
\newblock A secretary problem with uncertain employment.
\newblock \emph{Journal of applied probability}, 12\penalty0 (3):\penalty0
  620--624, 1975.

\bibitem[Tamaki(1991)]{tamaki1991secretary}
Tamaki, M.
\newblock A secretary problem with uncertain employment and best choice of
  available candidates.
\newblock \emph{Operations Research}, 39\penalty0 (2):\penalty0 274--284, 1991.

\bibitem[Tamaki(2000)]{tamaki2000minimal}
Tamaki, M.
\newblock Minimal expected ranks for the secretary problems with uncertain
  selection.
\newblock \emph{Lecture Notes-Monograph Series}, pp.\  127--139, 2000.

\end{thebibliography}
\bibliographystyle{icml2019}
\clearpage
\appendix
\section{Appendix}
\label{app:gap}
\subsection{Gap between optimal adaptive and value-ordered strategies}
The following example shows a gap between the optimal adaptive strategy and the any value-ordered strategy in the sequential setting.
\begin{align*}
    (p_1, v_1) &= (1, 1) &
    (p_2, v_2) &= (q, 1) \\
    (p_3, v_3) &= (q, 1) &
    (p_4, v_4) &= (q(1-q)/(v-q),v)
\end{align*}
Here, we set $q = 0.63667$, and take the limit as $v$ goes to $\infty$.

The optimal value-ordered strategy is make offers to $1$, $2$, and possibly $3$ if $2$ rejects. This yields value $1+2q-q^2$. The optimal strategy is shown in Figure~\ref{fig:ex_sol} and yields value $1+2q-q^2 + (1-q)q^2(v-1)/(v-q)$. As $v \to \infty$, the
approximation ratio approaches
\[
\frac{1+2q-q^3}{1+2q-q^2} \approx 1.0788
\]

Moreover, this example demonstrates that simple greedy algorithms are suboptimal -- in particular, making offers greedily by decreasing $p_i$, $v_i$, and $p_i v_i$ all yield suboptimal value.

\subsection{Lower bound for LP-based stochastic matching}
\label{app:lower_bound}

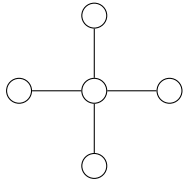
\begin{figure}[ht]
    \centering
    \begin{tikzpicture} 
    \node[circle, draw] (a) at (0, 0) {};
    \node[circle, draw] (b) at (1, 0) {};
    \node[circle, draw] (c) at (0, 1) {};
    \node[circle, draw] (d) at (0, -1) {};
    \node[circle, draw] (e) at (-1, 0) {};
    \draw[-] (a) -- (b);
    \draw[-] (a) -- (c);
    \draw[-] (a) -- (d);
    \draw[-] (a) -- (e);
    \end{tikzpicture} 
    \caption{Lower bound for integrality gap}
    \label{fig:int_gap}
\end{figure}

Consider the star graph as shown in Figure~\ref{fig:int_gap} with $n+1$ vertices, with $n$ leaves and 1 vertex in the middle. Each edge has $p_e = \frac{1}{n}$ and value 1. Let the number of probes be $t = n$. The value of the LP~\eqref{eq:matching_lp} is 1, assigning $x_e = 1$ to all edges. Since all edges are identical, any strategy is an optimal probing strategy, yielding expected value
\begin{align*}
    \sum_{i=1}^n \p{\frac{1}{n}} \p{1 - \frac{1}{n}}^{i-1}
    &= \frac{1}{n} \sum_{i=1}^n \p{1 - \frac{1}{n}}^{i-1} \\
    &= \frac{1}{n} \cdot \frac{1 - \p{1-\frac{1}{n}}^n}{\frac{1}{n}} \\
    &= 1 - \p{1-\frac{1}{n}}^n
\end{align*}
In the limit, this is $1 - 1/e$, so no probing strategy can be better than an $\frac{e}{e-1} \approx 1.581$-approximation.

\subsection{Deferred Proofs}
\label{app:proofs}

\begin{proof}[Proof of Claim~\ref{clm:size}]
For any set $A$ of at most $t$ items, $\size_2(A) = \sum_{i \in A} \size_2(i) = |A|/t \leq 1$. Further,
by Markov's inequality, we have $Pr[\size_1(A) \ge 1] \leq \E{\min\{\size_1(A), 1\}} \leq  \E{\sum_{i \in A} \min\{s_i, 1\}} = \mu_1(A)$. Consequently, we have $Pr(\|\size(A)\|_\infty < 1) \geq 1 - \mu_1(A)$.
\end{proof}

\begin{claim}
$\E{v_J} \geq \dfrac{1}{2} \E{v_I}$
\end{claim}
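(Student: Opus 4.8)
The plan is to prove the stated claim $\E{v_J}\ge\frac12\E{v_I}$ by reducing it to a scale-free probabilistic inequality. First I would fix coordinates: list the candidates on the root-to-first-acceptance spine of $\tree$ in the order the optimal tree probes them as $a_1,a_2,\dots$, with acceptance probabilities $q_j$ and values $w_j$ (\emph{not} sorted by value). Sampling $P\sim\pi_\tree$ makes the first segment $S_1=\{a_1,\dots,a_M\}$, where $M$ is the position of the first \emph{path}-active candidate (with the convention that $S_1$ is the whole spine when none is active). Then $I=\mathsf{last}(S_1)=a_M$, so $\E{v_I}=\sum_m w_m q_m\prod_{j<m}(1-q_j)$ (the no-acceptance event contributes $0$). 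Crucially, $\seg(S_1)$ is evaluated under \emph{fresh} acceptance randomness, since Procedure~\ref{alg:app_given_tree} actually re-probes; thus $\E{v_J}=\E{\seg(S_1)}$ with the outer expectation over $M$ and $\seg(S)=\E{\max\{v_i:i\in S,\ i\text{ active}\}}$ under independent probing. I would emphasize at the outset that the same-randomness reading is degenerate (on $S_1$ the only tree-active candidate is $a_M$ itself), so the fresh re-probing, and retaining the no-acceptance term inside $\E{v_J}$, are exactly what make the statement nontrivial and true.

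The key move is to strip out the values by layering. Writing $v=\int_0^\infty \mathbf 1[v\ge\theta]\,d\theta$ gives $\E{v_I}=\int_0^\infty \Pr[a_M\in H_\theta]\,d\theta$ and $\E{v_J}=\int_0^\infty \Pr[\exists\, i\in S_1\cap H_\theta:\ i\text{ fresh-active}]\,d\theta$, where $H_\theta=\{i:v_i\ge\theta\}$. Hence it suffices to prove, for an \emph{arbitrary} subset $H$ of spine positions, the value-free inequality
\[
\Pr[\exists\, i\in S_1\cap H\ \text{fresh-active}]\ \ge\ \tfrac12\,\Pr[a_M\in H].
\]
Call the left event $L$ and the right event $R$. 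This reduction is the heart of the argument: it converts a value-weighted comparison into a single clean probabilistic statement that I can attack head-on.

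To prove the $(L,R)$ inequality I would induct on the length of the spine, peeling off the first candidate $a_1$ and conditioning on its (independent) path- and fresh-activations. If $a_1\notin H$, then $R$ forces $a_1$ path-inactive while $L$ is unaffected by $a_1$, so both sides pick up the same factor $(1-q_1)$ and the bound follows from the inductive hypothesis on the shortened spine. If $a_1\in H$, then $a_1\in S_1$ always, and writing $f',g'$ for the analogous probabilities on the remaining spine I get $\Pr[R]=q_1+(1-q_1)g'$ and $\Pr[L]=q_1+(1-q_1)^2 f'$; combined with $f'\ge\frac12 g'$ this yields $\Pr[L]-\frac12\Pr[R]\ge\frac12 q_1\big(1-(1-q_1)g'\big)\ge 0$, where the last step uses only $g'\le 1$. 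The empty spine is the trivial base case.

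I expect the main obstacle to be conceptual rather than computational: recognizing that the bound is \emph{tight} at $\frac12$ (the extremal instance is many low-probability, high-value candidates, where the first tree-acceptance is almost surely a hire but fresh re-probing recovers only about half of its expected value), and therefore that no per-candidate charging, no fixed-segment pointwise comparison, and no value-based induction can succeed, since each fails precisely on that instance. The layering to the value-free $(L,R)$ statement is what supplies enough structure for a tight front-peeling induction to close; the single bookkeeping point needing care is checking that the no-acceptance term is carried correctly through both the layering and the induction.
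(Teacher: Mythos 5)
Your proof is correct, and its first two moves coincide with the paper's: you observe that the algorithm's acceptance coins are fresh relative to the path-sampling coins (the paper's set $A$ of ``active'' candidates, independent of the event $\{I \ge i\}$), and you use the layering $v = \int_0^\infty \mathbf{1}[v \ge \theta]\,d\theta$ to reduce the value-weighted claim to a statement about level sets, exactly paralleling the paper's computation of $\Pr[v_I \ge x]$ and $\Pr[v_J \ge x] = \Pr[A \cap W_x \ne \emptyset]$. Where you genuinely diverge is the endgame. The paper rewrites both integrands as expectations over indicators $\mathbf{1}_{q_j}$ for the spine elements \emph{outside} $W_x$, conditions on these realizations, and then invokes Claim 3.4 of \citet{gupta2017adaptivity} --- the inequality $\sum_i a_i \p{\prod_{j<i} b_j}^2 \ge \frac{1}{2} \sum_i a_i \prod_{j<i} b_j$ --- as a black box. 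You instead prove the value-free statement $\Pr[\exists\, i \in S_1 \cap H\ \text{fresh-active}] \ge \frac{1}{2}\Pr[a_M \in H]$ for \emph{arbitrary} $H$ by peeling the first spine element: your case $a_1 \notin H$ absorbs exactly the role of the paper's conditioning trick (the unsquared $q_j$ factors for elements below the value threshold), and your case $a_1 \in H$, closing with the slack term $\frac{1}{2}q_1\p{1 - (1-q_1)g'} \ge 0$, amounts to a self-contained inductive proof of a generalization of Claim 3.4. What the paper's route buys is brevity and an explicit connection to the stochastic-probing machinery it adapts; what yours buys is a citation-free, fully elementary argument in which the conditioning step (stated rather informally in the paper) is replaced by transparent bookkeeping, and in which tightness at $\frac{1}{2}$ (many low-probability candidates, as you note) is visible directly from the slack term. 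Two small points of hygiene: your $q_j$ denotes an \emph{acceptance} probability while the paper's appendix uses $q_i = 1 - p_i$ for rejection, so the notation should be reconciled before splicing; and in the case $a_1 \notin H$, the phrase ``$L$ is unaffected by $a_1$'' should say unaffected by $a_1$'s \emph{fresh} coin --- $L$ does depend on $a_1$'s path coin (if $a_1$ path-accepts, $S_1 = \{a_1\}$ and $L$ fails), which is precisely where the common factor $(1-q_1)$ in your arithmetic, which is itself correct, comes from.
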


\begin{proof}
Let $W$ be the random set of elements on this segment, up to and including $I$, and let $W_x \subseteq W$ be the subset of those elements with value at least $x$. For ease of notation, we define $q_i = 1 - p_i$. Then, we can write
\begin{equation}
\label{eq:evI}
    \E{v_I}
    = \int_0^\infty \Pr[v_I \ge x] \dx
    = \int_0^\infty \sum_{i \in W_x} p_i \prod_{j < i} q_j \dx.
\end{equation}
Let $A$ be the random set of ``active'' candidates who will accept an offer if they receive one. Then, we have
\begin{align*}
\E{v_J} &= \int_0^\infty \Pr[v_J \ge x] \dx \\
&= \int_0^\infty \Pr[A \cap W_x \ne \emptyset] \dx \\
&=  \int_0^\infty \sum_{i \in W_x} p_i \cdot \Pr[I \ge i] \cdot \Pr\b{\bigcap_{j < i, j \in W_x} j \notin A} \dx \\
&= \int_0^\infty \sum_{i \in W_x} p_i \p{\prod_{j < i} q_j} \p{\prod_{j < i, j \in W_x} q_j} \dx \\
&= \int_0^\infty \sum_{i \in W_x} p_i \p{\prod_{j < i, j \in W_x} q_j^2} \p{\prod_{j < i, j \notin W_x} q_j} \dx
\numberthis \label{eq:evJ}
\end{align*}

We can write~\eqref{eq:evI} as
\begin{equation}
    \E{\sum_{i \in W_x} p_i \p{\prod_{j<i, j \in W_x} q_j}
    \p{\prod_{j < i, j \notin W_x} \mathbf{1}_{q_j}}},
\end{equation}
where the expectation is taken over the indicators $\mathbf{1}_{q_j}$ for $j < i, j \notin W_x$. Similarly, we write~\eqref{eq:evJ} as
\begin{equation}
    \E{\sum_{i \in W_x} p_i \p{\prod_{j<i, j \in W_x} q_j^2}
    \p{\prod_{j < i, j \notin W_x} \mathbf{1}_{q_j}}},
\end{equation}
Conditioning on the realizations of these indicators, it is sufficient to show that
\begin{equation}
    \sum_{i \in W_x} p_i \p{\prod_{j<i, j \in W_x} q_j^2}
    \ge \frac{1}{2}
    \sum_{i \in W_x} p_i \p{\prod_{j<i, j \in W_x} q_j},
\end{equation}
which is true by Claim~\ref{claim:gupta}.
\end{proof}

\begin{claim}[\cite{gupta2017adaptivity}, Claim 3.4]
\label{claim:gupta}
For any ordered set A of probabilities $\{a_1, a_2, \dots, a_{|A|}\}$, let $b_j$ denote $1-a_j$ for $j \in [1, |A|]$. Then,
\[
\sum_i a_i \p{\prod_{j < i} b_j}^2 \ge \frac{1}{2} \sum_i a_i \prod_{j < i} b_i
\]
\end{claim}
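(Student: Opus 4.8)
The plan is to prove the inequality by a telescoping argument on the partial products of the $b_j$. I would introduce the notation $P_i = \prod_{j<i} b_j$, with the convention $P_1 = 1$ for the empty product, so that $P_{i+1} = b_i P_i = (1-a_i)P_i$. In this notation the claim reads $\sum_i a_i P_i^2 \ge \frac12 \sum_i a_i P_i$, where both sums run from $i=1$ to $|A|$. The whole proof then reduces to evaluating (or bounding) each side as a telescoping sum and comparing.

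First I would handle the right-hand side exactly. Since $a_i P_i = (1-b_i)P_i = P_i - P_{i+1}$, the sum telescopes to $\sum_{i=1}^{|A|} a_i P_i = P_1 - P_{|A|+1} = 1 - P_{|A|+1}$. Next, for the left-hand side I would telescope the squares: observe that $P_i^2 - P_{i+1}^2 = P_i^2(1-b_i^2) = P_i^2(1-b_i)(1+b_i) = a_i(1+b_i)P_i^2$. Because each $b_i \in [0,1]$ we have $1 + b_i \le 2$, so $a_i P_i^2 \ge \frac12\,(P_i^2 - P_{i+1}^2)$; summing over $i$ and telescoping gives $\sum_i a_i P_i^2 \ge \frac12\,(1 - P_{|A|+1}^2)$.

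Finally I would combine the two bounds. It remains to check $1 - P_{|A|+1}^2 \ge 1 - P_{|A|+1}$, i.e.\ $P_{|A|+1} \ge P_{|A|+1}^2$, which holds because $P_{|A|+1} = \prod_j b_j$ lies in $[0,1]$. Chaining the inequalities yields $\sum_i a_i P_i^2 \ge \frac12\,(1 - P_{|A|+1}) = \frac12 \sum_i a_i P_i$, as desired.

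I do not expect a genuine obstacle here; the argument is computational once the substitution $P_i = \prod_{j<i} b_j$ is made. The one place requiring a moment's care is the step $1 + b_i \le 2$ that converts the squared telescope into the factor $\frac12$, together with the closing observation $P^2 \le P$ for $P \in [0,1]$ — both crucially exploit that the $b_i$ are genuine probabilities rather than arbitrary nonnegative reals. An alternative route by induction on $|A|$ (peeling off the last factor $b_{|A|}$) would reduce to precisely the same two facts, so I would prefer the telescoping presentation for its brevity.
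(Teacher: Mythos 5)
Your proof is correct: the telescoping identities $\sum_i a_i P_i = 1 - P_{|A|+1}$ and $P_i^2 - P_{i+1}^2 = a_i(1+b_i)P_i^2 \le 2\,a_i P_i^2$, combined with $P^2 \le P$ for $P \in [0,1]$, chain exactly as you describe (and you rightly read the $b_i$ in the claim's right-hand side as the evident typo for $b_j$). Note that the paper itself offers no proof of this claim --- it is imported verbatim from Claim~3.4 of \citet{gupta2017adaptivity} --- and your telescoping argument is essentially the standard one given in that reference, so your write-up matches the intended proof rather than departing from it.
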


\subsection{Equivalence to Stochastic Knapsack}
\label{app:equivalence}
We must show that the the optimal solution remains unchanged whether values are received stochastically or deterministically.

It is easy to verify that the vector item sizes and knapsack capacities capture the budget and deadline requirements of the knapsack hiring problem.
However, in the reduction, item $i$ deterministically yields a value of $p_i v_i$ instead of value $v_i$ when $i$ is active (happens with probability $p_i$) and value 0 otherwise.

To account for this, observe that the optimal item to probe next depends only on the subset of remaining items, the number of probes left, and the capacity of the knapsack -- the value accumulated thus far has no bearing on the next action. Let $\opt(S, t, b)$ be the optimal value achievable with items (candidates) $S$, number of probes $t$, and budget $b$ remaining. The optimal strategy is then given by an exponential sized dynamic program, with the following recurrence 
\begin{align*}
\opt(S, t, b) = \max_{i \in S} \Big\{\ &p_i(v_i + \opt(S \backslash \{i\}, t - 1, b - s_i)) \\
+& (1-p_i) \opt(S \backslash \{i\}, t - 1, b) \Big\}. \numberthis \label{eq:knap_opt}
\end{align*}
Assuming inductively that $\opt(S', t, b)$ is unchanged whether $i$ contributes value $v_i$ with probability $p_i$ or deterministic value $p_i v_i$ for all smaller sets $S'$, we see that~\eqref{eq:knap_opt} is optimized by the same $i$ in both cases. Thus, the optimal strategy is unchanged in the deterministic and random cases and our reduction is complete.

\end{document}